\newtheorem{lemma}{Lemma}
\newtheorem{remark}{Remark}
\newtheorem{proposition}{Proposition}
\newcommand{\multiline}[1]{%
	\begin{tabularx}{\dimexpr\linewidth-\ALG@thistlm}[t]{@{}X@{}}
		#1
	\end{tabularx}
}
\newcommand\copyrighttext{%
	\footnotesize \textcopyright 2023 IEEE. Personal use of this material is permitted.
	Permission from IEEE must be obtained for all other uses, in any current or future
	media, including reprinting/republishing this material for advertising or promotional
	purposes, creating new collective works, for resale or redistribution to servers or
	lists, or reuse of any copyrighted component of this work in other works.
	DOI: \href{<https://ieeexplore.ieee.org/abstract/document/10373918>}{10.1109/LCOMM.2023.3344599}}
\newcommand\copyrightnotice{%
	\begin{tikzpicture}[remember picture,overlay]
		\node[anchor=south,yshift=10pt] at (current page.south) {\fbox{\parbox{\dimexpr\textwidth-\fboxsep-\fboxrule\relax}{\copyrighttext}}};
	\end{tikzpicture}%
}
\begin{document}

\title{Intelligent Reflecting Surfaces vs. Full-Duplex Relays: A Comparison in the Air}

\author{Qian Ding,~\IEEEmembership{}{Jie Yang,~\IEEEmembership{}}Yang Luo,~\IEEEmembership{}{and Chunbo Luo,~\IEEEmembership{Senior Member,~IEEE}}
	% <-this % stops a space
	\thanks{Qian Ding, Jie Yang, Yang Luo and Chunbo Luo are with the School of Information and Communication Engineering, University of Electronic Science and Technology of China, Chengdu 611731, China (e-mail: dingqian@std.uestc.edu.cn; jieyang.std@gmail.com; luoyang@uestc.edu.cn; c.luo@ieee.org) (\textit{corresponding author: Chunbo Luo}). This research is partially funded by the S\&T Special Program of Huzhou (Grant No. 2022GZ14) and National Key Research and Development Program of China (2023YFC3806003, 2023YFC3806001).}
}

% The paper headers
\markboth{Journal of \LaTeX\ Class Files,~Vol.~14, No.~8, August~2021}%
{Shell \MakeLowercase{\textit{et al.}}: A Sample Article Using IEEEtran.cls for IEEE Journals}

%\IEEEpubid{0000--0000/00\$00.00~\copyright~2021 IEEE}
% Remember, if you use this you must call \IEEEpubidadjcol in the second
% column for its text to clear the IEEEpubid mark.

\maketitle
\copyrightnotice
\begin{abstract}
	This letter aims to provide a fundamental analytical comparison for the two major types of relaying methods: intelligent reflecting surfaces and full-duplex relays, particularly focusing on unmanned aerial vehicle communication scenarios. Both amplify-and-forward and decode-and-forward relaying schemes are included in the comparison. In addition, optimal 3D UAV deployment and minimum transmit power under the quality of service constraint are derived. Our numerical results show that IRSs of medium size exhibit comparable performance to AF relays, meanwhile outperforming DF relays under extremely large surface size and high data rates.
\end{abstract}

\begin{IEEEkeywords}
	Intelligent reflecting surfaces, unmanned aerial vehicle, amplify-and-forward, decode-and-forward, full-duplex.
\end{IEEEkeywords}

\section{Introduction}
\IEEEPARstart{I}{ntelligent} reflecting surface (IRS) is a revolutionizing architecture that can modify the amplitude and phase of incident signals via controllable reflection, to enhance the performance of wireless communication networks\cite{huang2019reconfigurable}. As a complementary device, IRS could be easily installed on the walls, ceilings and facades of buildings, benefiting from its low fabrication costs and light weight\cite{wu2019towards}. Although the ease of deployment is quite appealing, some fundamental limitations remain. First, deploying an IRS at a fixed place potentially reduces its coverage since it only serves a limited and fixed area. Second, line-of-sight (LoS) paths are typically difficult to maintain due to complex environments near ground. As a result, the signals will be severely attenuated after undergoing several reflections. Third, it is practically difficult to deploy IRSs considering environmental and civil constraints. To unlock the limitations of terrestrial IRSs, it is promising to deploy an IRS on an unmanned aerial vehicle (UAV), known as UAV-IRS, aerial IRS or UAV-mounted/borne/carried IRS\cite{you2021enabling}. Thanks to its mobility, UAV-IRS could bypass the obstacles to establish LoS links with ground users. More significantly, wider coverage range is realized through full-angle reflection towards the ground.

Despite the attractiveness of UAV-IRS, it is strongly encouraged to compare this integration with traditional full-duplex UAV-relay, given that previous literature mainly focused on the comparison in terrestrial scenarios\cite{huang2019reconfigurable,han2019large,bjornson2019intelligent,wu2019intelligent,gu2021performance,ye2021spatially,bazrafkan2023performance,goh2023comparative}. In \cite{di2020reconfigurable}, the authors reviewed the differences and similarities between full-duplex relays and IRSs. In \cite{huang2019reconfigurable} and \cite{han2019large}, the authors showed that IRSs could achieve 300\% higher energy efficiency against half-duplex amplify-and-forward (AF) relays while the authors of \cite{bjornson2019intelligent} pointed out that IRSs could outperform half-duplex decode-and-forward (DF) relays only under large surfaces and/or high data rates. Furthermore, AF full-duplex relays and IRSs were compared in MIMO systems by optimizing the beamforming matrices\cite{wu2019intelligent,gu2021performance}. Differently, \cite{ye2021spatially} showed that IRSs could outperform DF full-duplex relays in terms of outage probability and energy efficiency. More recently, \cite{bazrafkan2023performance} presented that IRSs had a huge performance loss compared to DF full-duplex relays. In\cite{goh2023comparative}, IRSs were compared with AF and DF relays, however, only some experimental results were provided, with no further theoretical analysis. In a nutshell, fair comparison between IRSs and relays including both AF and DF protocols, as well as mathematical analysis are needed and yet not provided.

Since IRSs could be interpreted as full-duplex\cite{wu2019towards}, we thus perceive full-duplex relays as fairer comparison targets. Our goal is to provide a fundamental analytical comparison between IRSs and full-duplex relays including AF and DF schemes, specifically in UAV communication networks. For fairness of comparison, we prove the optimal 3D deployment of UAV and derive the minimum transmit power under quality of service (QoS) constraint for IRS, AF and DF relays. The novel contribution will underpin the emerging paradigm of integrated perception, communication and control (IPCC) design. Further, numerical simulations are conducted to effectively compare IRSs against full-duplex relays and justify the proposed theorems.
\vspace{-0.5em}
\section{System Model}
\label{sec:system model}
To study the fundamental performance for fair comparison between the two types of relaying technologies, we consider a classic three-node system model comprised of a single-antenna source (S), a single-antenna destination (D) and an aerial node (R), i.e., UAV-IRS or UAV-relay, as shown in Fig. \ref{fig:three-node system}. The IRS is installed with $N=N_xN_y$ elements while the full-duplex relay is equipped with two uniform planar arrays (UPAs) of size $N$ separately used for transmission and reception. It is assumed that the direct link between S and D suffers from deep fading and thus is negligible\cite{bjornson2019intelligent,bazrafkan2023performance}. The UAV helps establish LoS transmission links with S and D in order to relay signals. The coordinates of S, D and R are given by $(0,0,0)$, $(L,0,0)$ and $(x_U,y_U,h_U)$, respectively. Hence the deterministic air-to-ground channel gain from S to R and that from R to D are, respectively, expressed by\cite{zeng2016throughput,Ding2023maximization}
\vspace{-0.2em}
\begin{equation}
%	\vspace{-1em}
	g_{SR} = \beta_0d_{SR}^{-2},\quad g_{RD}=\beta_0d_{RD}^{-2},\label{eq:channel gain}
\end{equation}
with $\beta_0$ denoting the channel gain at a reference distance of 1m, $d_{SR}=\sqrt{x_U^2 + y_U^2 +h_U^2}$ and $d_{RD}=\sqrt{(x_U-L)^2 + y_U^2 +h_U^2}$. Further, the channel vectors from S to R and that from R to D are given by, respectively,
\begin{align}
	\mathbf{h}_{SR}=\sqrt{g_{SR}}\mathbf{a}_{SR},\quad \mathbf{h}_{RD}=\sqrt{g_{RD}}\mathbf{a}_{RD},
\end{align}
with $\mathbf{a}_{i}=[1,\ldots,e^{-j\frac{2\pi}{\lambda}d\sin\theta_{i}[(n_x-1)\cos\phi_{i}+(n_y-1)\sin\phi_{i}]},\ldots,$ $e^{-j\frac{2\pi}{\lambda}d\sin\theta_{i}[(N_x-1)\cos\phi_{i}+(N_y-1)\sin\phi_{i}]}]^T, i\in\{SR,RD\}$. $\lambda$ represents the wavelength and $d$ denotes antenna separation. $n_x$ and $n_y$ denote the indexes of antenna along horizontal and vertical directions, respectively. $\theta_{i}$ and $\phi_{i}$ denote the corresponding elevation and azimuth angles, respectively.
\begin{figure}[t]
	\centering
	\includegraphics[width=0.8\columnwidth]{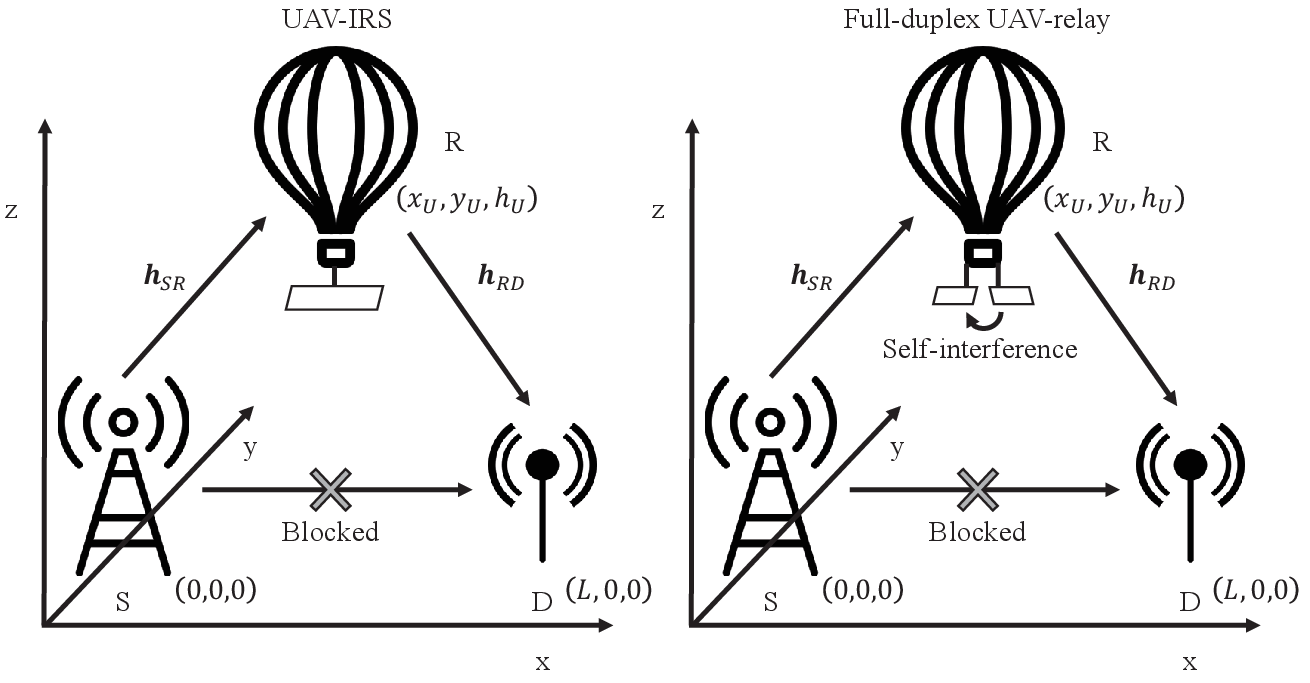}
	\caption{Illustration of three-node systems}
	\label{fig:three-node system}
	\vspace{-1em}
\end{figure}
\vspace{-0.5em}
\subsection{IRS-assisted networks}
In this setup, we let diagonal matrix $\mathbf{\Psi}=\text{diag}(e^{j\psi_1},\ldots,e^{j\psi_N})$ represent the passive beamforming of IRS, and thus the received signal at D is denoted by $y_D^{(IRS)}=\mathbf{h}_{RD}^H\mathbf{\Psi}\mathbf{h}_{SR}\sqrt{p_S^{(IRS)}}s + n_D$, where $p_S^{(IRS)}$ denotes the transmit power of S, $s$ represents the transmit signal with unit power and $n_D\sim \mathcal{CN}(0,\sigma^2)$ is the noise at D. Thereby, the corresponding capacity is
\begin{equation}
	\begin{aligned}
		R_{IRS} &= \log_2\bigg(1+\frac{p_S^{(IRS)}|\mathbf{h}_{RD}^H\mathbf{\Psi}\mathbf{h}_{SR}|^2}{\sigma^2}\bigg)\\
	&\overset{(a)}{=}\log_2\bigg(1+\frac{N^2p_S^{(IRS)}g_{SR}g_{RD}}{\sigma^2}\bigg),\label{eq:rate of IRS}
	\end{aligned}
\end{equation}
where $(a)$ holds by adopting coherent phase shifting scheme $\psi_n=j\frac{2\pi}{\lambda}d\{\sin\theta_{SR}[(n_x-1)\cos\phi_{SR}+(n_y-1)\sin\phi_{SR}]-\sin\theta_{RD}[(n_x-1)\cos\phi_{RD}+(n_y-1)\sin\phi_{RD}]\},\forall n=n_xn_y$\cite{bjornson2019intelligent}. The beam alignment for the high-mobility UAV-IRS can be achieved using the framework in \cite{chen2022reconfigurable}. As expected, the received power increases in the order of $\mathcal{O}(N^2)$\cite{wu2019towards}.
\vspace{-0.5em}
\subsection{AF full-duplex relay-assisted networks}
In this case, the full-duplex relay simultaneously receives and transmits signals with AF scheme. Hence, the received signal at relay is given by $\mathbf{y}_R^{(AF)}=\sqrt{p_{S}^{(AF)}}\mathbf{h}_{SR}s+\sqrt{\rho p_R^{(AF)}}\mathbf{\tilde{y}}_R^{(AF)}+\mathbf{n}_R$,
where $p_{S}^{(AF)}$ and $p_{R}^{(AF)}$ denote the transmit power of S and AF relay, respectively. $\rho$ characterizes the effects of self-interference cancellation, $\mathbf{n}_R\sim \mathcal{CN}(0,\sigma^2\mathbf{I}_N)$ is the noise, and $\mathbf{\tilde{y}}_R^{(AF)}$ is a delayed and distorted replica of $\mathbf{y}_R^{(AF)}$. Without loss of generality, we assume $\mathbf{\tilde{y}}_R^{(AF)}\sim \mathcal{CN}(0,\mathbf{I}_N)$\cite{sabharwal2014band}. Then the received signal is processed with combiner $\mathbf{v}$ and forward with precoder $\mathbf{w}$, where $\|\mathbf{v}\| = 1$ and $\|\mathbf{w}\| = 1$.  Therefore, the received signal at D is expressed by
\vspace{-0.5em}
\begin{equation}
	\begin{aligned}
		y_D^{(AF)}&\!=\!\sqrt{p_R^{(AF)}}\mathbf{h}_{RD}^H\mathbf{w}\mathbf{v}^H\mathbf{y}_R^{(AF)} + n_D\\
	&\!=\!\sqrt{\!p_R^{(AF)}p_{S}^{(AF)}}\mathbf{h}_{RD}^H\mathbf{w}\mathbf{v}^H\mathbf{h}_{SR}s\\
	&\!+\!\sqrt{\!\rho}p_R^{(AF)}\mathbf{h}_{RD}^H\mathbf{w}\mathbf{v}^H\tilde{\mathbf{y}}_R^{(AF)}\!\!+\!\! \sqrt{\!p_R^{(AF)}}\mathbf{h}_{RD}^H\mathbf{w}\mathbf{v}^H\mathbf{n}_R \!+\! n_D.
	\end{aligned}
\end{equation}
\begin{figure*}[t]
	\begin{equation}
		\begin{aligned}
			R_{AF}&=\!\mathbb{E}\left[\log_2\left(1+\frac{p_R^{(AF)}p_{S}^{(AF)}|\mathbf{h}_{RD}^H\mathbf{w}\mathbf{v}^H\mathbf{h}_{SR}|^2}{\rho (p_R^{(AF)})^2|\mathbf{h}_{RD}^H\mathbf{w}\mathbf{v}^H\tilde{\mathbf{y}}_R^{(AF)}|^2+p_R^{(AF)}|\mathbf{h}_{RD}^H\mathbf{w}\mathbf{v}^H\mathbf{n}_R|^2+|n_D|^2}\right)\right]\\
			&\overset{(b)}{\leq}\!\log_2\!\left(\!1\!+\!\frac{p_R^{(AF)}p_{S}^{(AF)}\mathbb{E}\left[|\mathbf{h}_{RD}^H\mathbf{w}\mathbf{v}^H\mathbf{h}_{SR}|^2\right]}{(N\rho (p_R^{(AF)})^2 + N\sigma^2p_R^{(AF)})\mathbb{E}[|\mathbf{h}_{RD}^H\mathbf{w}|^2]+\sigma^2}\!\right)\overset{(c)}{=}\log_2\!\left(\!1\!+\!\frac{N^2p_R^{(AF)}p_{S}^{(AF)}g_{SR}g_{RD}}{(N\rho (p_R^{(AF)})^2 + N\sigma^2p_R^{(AF)})g_{RD}+\sigma^2}\!\right).\label{eq:rate for AF relay}
		\end{aligned}
	\end{equation}
	\hrule
	\vspace{-1em}
\end{figure*}
The achievable rate of AF relay-aided networks is derived as \eqref{eq:rate for AF relay} at the top of next page, where $(b)$ holds due to Jensen's inequality and the fact that $\mathbb{E}[\|\tilde{\mathbf{y}}_R^{(AF)}\|^2]=N,\mathbb{E}[\|\mathbf{n}_R\|]=\sigma^2N, \mathbb{E}[|n_R|^2]=\sigma^2,\|\mathbf{w}\|^2 = 1$ and $\|\mathbf{v}\|^2=1$; $(c)$ holds by using maximum ratio transmission and combing scheme $\mathbf{v}=\frac{\mathbf{h}_{SR}}{\|\mathbf{h}_{SR}\|}$, $\mathbf{w}=\frac{\mathbf{h}_{RD}}{\|\mathbf{h}_{RD}\|}$\cite{wu2019intelligent} and the fact that $\mathbb{E}[\|\mathbf{h}_{SR}\|^2]=Ng_{SR}$ and $\mathbb{E}[\|\mathbf{h}_{RD}\|^2]=Ng_{RD}$.
\vspace{-0.5em}
\subsection{DF full-duplex relay-assisted networks}
With DF relaying protocol, the relay first decodes the received signal, which is given by $\mathbf{y}_R^{(DF)}=\sqrt{p_S^{(DF)}}\mathbf{h}_{SR}s+\sqrt{\rho p_R^{(DF)}}\tilde{\mathbf{s}}+\mathbf{n}_R$,
where $p_S^{(DF)}$ and $p_R^{(DF)}$ denote the transmit power of S and DF relay, respectively. $\tilde{\mathbf{s}}\sim \mathcal{CN}(0,\mathbf{I}_N)$ is a delayed and distorted replica of $\mathbf{w}s$. After decoding the information, DF relay encodes it for further transmission and the signal received by D is $y_D^{(DF)}=\sqrt{p_R^{(DF)}}\mathbf{h}^H_{RD}\mathbf{w}s+n_D$.
Following the similar analysis to \eqref{eq:rate for AF relay}, the achievable rate of DF relay-aided networks is
\begin{equation}\color{blue}
	R_{DF}\leq\log_2\bigg(1+\min\bigg(\frac{p_S^{(DF)}g_{SR}}{\rho p_R^{(DF)}+\sigma^2},\frac{Np_R^{(DF)}g_{RD}}{\sigma^2}\bigg)\bigg).\label{eq:rate for DF relay}
\end{equation}
\vspace{-2em}
\section{3D UAV deployment optimization}
\label{sec:optimal UAV deployment}
In this section, we provide rigorous proof for optimal UAV deployment\footnote{The deployment strategy in this letter is also applicable to terrestrial scenarios if LoS paths are not blocked and much stronger than non-LoS (NLoS) paths.}. It is assumed that UAV could move horizontally in a wide area $\mathcal{A}$ covering S and D while it could fly vertically at the altitude ranging from $h_{min}$ to $h_{max}$, where $h_{min}$ denotes the minimum height to maintain an LoS path, and $h_{max}$ means the maximum tolerable height.
\vspace{-0.5em}
\subsection{IRS-assisted networks}
Here we start with IRS to draw some useful insights. Considering the deployment constraints and substituting \eqref{eq:channel gain} into \eqref{eq:rate of IRS}, we can express the UAV positioning problem as
\begin{subequations}
	\begin{align}
		\max_{x_U,y_U,h_U} &R_{IRS}=\log_2\bigg(1+\frac{N^2p_S^{(IRS)}\beta_0^2}{\sigma^2d_{SR}^{2}d_{RD}^{2}}\bigg)\label{obj:IRS deployment}\\
		\text{s.t.} \quad &(x_U,y_U)\in \mathcal{A},h_{min}\leq h_U \leq h_{max}.\label{cons:UAV deployment for IRS}
	\end{align}\label{pro:UAV deployment for IRS}
\end{subequations}
We observe that only the denominator in \eqref{obj:IRS deployment} is related to the UAV position. Thus problem \eqref{pro:UAV deployment for IRS} is reduced to
\vspace{-0.5em}
\begin{subequations}
	\begin{align}
		\min_{x_U,y_U,h_U} &f(x_U,y_U,h_U)=d_{SR}^{2}d_{RD}^{2}\nonumber\\
		&=(x_U^2 + y_U^2 +h_U^2)\big((x_U-L)^2 + y_U^2 +h_U^2\big)\label{obj:UAV deployment for IRS}\\
		\text{s.t.} \quad &\eqref{cons:UAV deployment for IRS}.
	\end{align}\label{pro:UAV deployment for IRS:simplified}
\end{subequations}
\vspace{-1em}
\begin{lemma}
The optimal solution to problem \eqref{pro:UAV deployment for IRS:simplified} is
\begin{align}
	&x_U^*=\left\{\begin{aligned}
		&\frac{L}{2}, \quad\text{if}\ \ L\leq 2h_{min}\\
		&\frac{L}{2} \pm\sqrt{\frac{L^2}{4}-h_{min}^2},\quad\text{if}\ \ L> 2h_{min}
	\end{aligned}
	\right.,\\
	&y_U^*=0,h_U^*=h_{min}.
\end{align}\label{lemma:optimal UAV deployment for IRS}
\end{lemma}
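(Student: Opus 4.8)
The plan is to decouple the optimization over $(y_U,h_U)$ from the one over $x_U$, dispatch the former by an elementary monotonicity argument, and then reduce the remaining one-dimensional problem to minimizing a quadratic in a squared variable. First I would write $f = ab$ with $a = x_U^2 + s$ and $b = (x_U-L)^2 + s$, where $s \coloneqq y_U^2 + h_U^2 \ge 0$. Both $a$ and $b$ are nonnegative and strictly increasing in $s$, so $\partial(ab)/\partial s = a+b > 0$ for $L>0$; hence, for every fixed $x_U$, $f$ is strictly increasing in $s$. Therefore any minimizer must make $s$ as small as the constraints permit, i.e. $y_U^* = 0$ and $h_U^* = h_{min}$ (the upper bound $h_{max}$ is never active). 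At this point I would note that the points $(x_U,0)$ with $x_U$ between $0$ and $L$ lie in $\mathcal{A}$, since $\mathcal{A}$ is assumed to be a wide region covering both S and D, so the reduction is feasibility-preserving.

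Next I would substitute $y_U=0$, $h_U=h_{min}$ and minimize $g(x_U) = (x_U^2 + h_{min}^2)\big((x_U-L)^2 + h_{min}^2\big)$ over $x_U$. Exploiting the evident symmetry of $g$ about $x_U = L/2$, I would set $t = x_U - L/2$, so that $x_U^2 + h_{min}^2 = t^2 + Lt + c$ and $(x_U-L)^2 + h_{min}^2 = t^2 - Lt + c$ with $c \coloneqq L^2/4 + h_{min}^2$; their product collapses to $(t^2+c)^2 - L^2 t^2$. Writing $u = t^2 \ge 0$, this is the upward parabola $q(u) = u^2 + (2h_{min}^2 - L^2/2)\,u + c^2$, whose unconstrained vertex sits at $u^\ast = L^2/4 - h_{min}^2$.

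The last step is the case split on the sign of $u^\ast$. If $L \le 2h_{min}$, then $u^\ast \le 0$, so on the feasible range $u \ge 0$ the minimum of $q$ is attained at $u=0$, i.e. $t=0$ and $x_U^\ast = L/2$. If $L > 2h_{min}$, then $u^\ast > 0$ is attainable, giving $t^2 = L^2/4 - h_{min}^2$ and hence $x_U^\ast = L/2 \pm \sqrt{L^2/4 - h_{min}^2}$; since $\sqrt{L^2/4 - h_{min}^2} < L/2$ whenever $h_{min}>0$, both candidates lie strictly between $0$ and $L$ and are thus feasible. I do not anticipate a real obstacle: the only points needing care are verifying that the $(y_U,h_U)$-reduction holds \emph{uniformly} in $x_U$ (which it does, because $y_U$ and $h_U$ enter both factors only through the shared additive term $s$) and confirming that every claimed minimizer belongs to $\mathcal{A}$; the core computation is just a one-variable quadratic minimization.
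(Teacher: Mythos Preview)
Your argument is correct. The route, however, differs from the paper's in two respects worth noting. For the $(y_U,h_U)$-reduction, the paper simply invokes an external result (\cite[Lemma~3]{zeng2016throughput}) to conclude $x_U^*\in[0,L]$, $y_U^*=0$, $h_U^*=h_{min}$, whereas you give a self-contained monotonicity argument via $\partial(ab)/\partial s=a+b>0$; your version is more transparent and costs essentially nothing. For the one-dimensional problem in $x_U$, the paper proceeds by calculus: it computes $\tilde f'(x_U)$, factors it as $4\big(x_U-\tfrac{L}{2}\big)\big((x_U-\tfrac{L}{2})^2+h_{min}^2-\tfrac{L^2}{4}\big)$, and reads off the sign pattern in the two regimes. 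You instead exploit the symmetry algebraically, substituting $t=x_U-L/2$ and then $u=t^2$ to land on a quadratic $q(u)$ whose vertex location $u^\ast=L^2/4-h_{min}^2$ immediately drives the case split. The two arguments are close cousins---the paper's derivative factorization is $4t(t^2-u^\ast)$ in your notation---but your reduction to a one-variable quadratic avoids any sign-chasing and makes the role of the threshold $L=2h_{min}$ especially crisp. Either way the content is elementary; your presentation is arguably the cleaner of the two.
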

\vspace{-3em}
\begin{proof}
	See Appendix \ref{proof:lemma 1}.
\end{proof}
\vspace{-1em}
%\begin{proof}
%	First, we readily have $x_U^*\in[0,L]$, $y_U^*=0$ and $h_U^*=h_{min}$ based on \cite[Lemma~3]{zeng2016throughput}. Then the problem is simplified as finding $x_U\in [0,L]$ to minimize $\tilde{f}(x_U)=(x_U^2 +h_{min}^2)((x_U-L)^2 +h_{min}^2)$. Following \cite[Theorem~1]{ren2023ondeployment}, we compute the first derivative of $\tilde{f}(x_U)$ as
%\begin{align}
%	\frac{d\tilde{f}(x_U)}{dx_U}&=4x_U^3-6Lx_U^2+(2L^2 + 4h_{min}^2)x_U-2h_{min}^2L\nonumber\\
%	&=4\bigg(x_U-\frac{L}{2}\bigg)\bigg(\bigg(x_U-\frac{L}{2}\bigg)^2+h_{min}^2-\frac{L^2}{4}\bigg).
%\end{align}
%
%1) If $L\leq 2h_{min}$, we have $\frac{d\tilde{f}(x_U)}{dx_U}<0$ for $x_U \in[0,\frac{L}{2})$ and $\frac{d\tilde{f}(x_U)}{dx_U}>0$ for $x_U \in(\frac{L}{2},L]$. Obviously, $x_U =\frac{L}{2}$ is the minimum point of $\tilde{f}(x)$;
%
%2) On the other hand, if $L> 2h_{min}$, we can obtain $x_{1,3}=\frac{L}{2} \pm\sqrt{\frac{L^2}{4}-h_{min}^2}$ and $x_2=\frac{L}{2}$ by setting $\frac{d\tilde{f}(x_U)}{dx_U}=0$. It is observed that $\frac{d\tilde{f}(x_U)}{dx_U}<0$ for $x_U \in[0,x_1)\cup(x_2,x_3)$ and $\frac{d\tilde{f}(x_U)}{dx_U}>0$ for $x_U \in(x_1,x_2)\cup (x_3,L]$. Therefore, $x_1$ and $x_3$ are two minimum points and $\tilde{f}(x_1)=\tilde{f}(x_2)$ since $\tilde{f}(x_U)$ is symmetric about $x_U=\frac{L}{2}$.
%
%Lemma \ref{lemma:optimal UAV deployment for IRS} is proved by combining the two cases above.
%\end{proof}
\begin{remark}
	Lemma \ref{lemma:optimal UAV deployment for IRS} states that UAV should hover at the midpoint of S and D if its flight altitude is relatively high. Otherwise, the UAV should hover near to either S or D when its flight altitude is relatively low.
	\label{remark:remark for IRS}
\end{remark}
\vspace{-2em}
\subsection{AF full-duplex relay-assisted networks}
For AF relays, we substitute \eqref{eq:channel gain} into \eqref{eq:rate for AF relay} and follow the steps in the IRS case. A simplified UAV deployment problem is expressed by
\begin{align}
	\min_{x_U}\quad \xi_1 \tilde{g}(x_U) + \xi_2 \tilde{f}(x_U), \quad \text{s.t.} \quad x_U\in[0,L],\label{pro:UAV deployment for AF relay:simplified}
\end{align}
%\begin{subequations}
%	\begin{align}
%		\max_{x_U\!,y_U\!,h_U} &R_{AF}\!=\log_2\left(\!1\!+\!\frac{1}{\xi_1g(x_U\!,y_U\!,h_U)\!+\!\xi_2 f(x_U\!,y_U\!,h_U)\!}\right)\\
%		\text{s.t.} \quad &\eqref{cons:UAV deployment for IRS},
%	\end{align}\label{pro:UAV deployment for AF relay}
%\end{subequations}
where $\xi_1=\frac{\rho p_R^{(AF)}+\sigma^2}{\added{N}\beta_0p_S^{(AF)}}$, $\xi_2=\frac{\sigma^2}{N^2\beta_0^2p_S^{(AF)}p_{R}^{(AF)}}$, $\tilde{g}(x_U)=x_U^2 + h_{min}^2$ and $\tilde{f}(x_U)$ is defined in Lemma \ref{lemma:optimal UAV deployment for IRS}.
\begin{lemma}
	The optimal solution to problem \eqref{pro:UAV deployment for AF relay:simplified} satisfies $\tilde{x}^*_U \in [0, x_U^*]$, with $x_U^*$ being defined in Lemma \ref{lemma:optimal UAV deployment for IRS}.
%	\begin{align}
%		x_U^*\in\left\{\begin{aligned}
%			&\left[0,\frac{L}{2}\right], \quad\text{if}\ \ L\leq 2h_{min}\\
%			&\left[0,\frac{L}{2} -\sqrt{\frac{L^2}{4}-h_{min}^2}\right],\quad\text{if}\ \ L> 2h_{min}
%		\end{aligned}
%		\right..
%	\end{align}
\label{lemma:optimal UAV deployment of AF relay}
\end{lemma}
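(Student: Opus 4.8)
The plan is to analyse the univariate objective $F(x_U)=\xi_1\tilde{g}(x_U)+\xi_2\tilde{f}(x_U)$ on $[0,L]$ directly, where $\tilde{g}(x_U)=x_U^2+h_{min}^2$, $\tilde{f}(x_U)=(x_U^2+h_{min}^2)\big((x_U-L)^2+h_{min}^2\big)$, and $\xi_1,\xi_2>0$. I would first record the two structural facts that do all the work: $\tilde{g}$ is strictly increasing on $[0,L]$, while $\tilde{f}$ is symmetric about $x_U=L/2$, that is $\tilde{f}(L-x_U)=\tilde{f}(x_U)$, which is immediate since the reflection $x_U\mapsto L-x_U$ merely swaps the two factors of $\tilde{f}$. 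A one-line computation also yields the factorised derivative $\tilde{f}'(x_U)=2(2x_U-L)(x_U^2-Lx_U+h_{min}^2)$, whose interior stationary points are exactly $x_U=L/2$ and the roots $L/2\pm\sqrt{L^2/4-h_{min}^2}$, i.e. the points appearing in Lemma~\ref{lemma:optimal UAV deployment for IRS}.

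Next I would confine the minimiser to $[0,L/2]$ using the reflection. For any $x_U\in[0,L/2)$ we have $\tilde{g}(L-x_U)>\tilde{g}(x_U)$ but $\tilde{f}(L-x_U)=\tilde{f}(x_U)$, hence $F(L-x_U)>F(x_U)$. Therefore no point of $(L/2,L]$ can be optimal, since its mirror image in $[0,L/2)$ strictly beats it; in particular the minimiser lies in $[0,L/2]$. When $L\le 2h_{min}$ this already proves the claim, because there $x_U^*=L/2$.

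For $L>2h_{min}$, set $r_1=L/2-\sqrt{L^2/4-h_{min}^2}=x_U^*$ and $r_2=L-r_1$, the two roots above. From $F'(x_U)=2\xi_1 x_U+2\xi_2(2x_U-L)(x_U^2-Lx_U+h_{min}^2)$ I would observe that on $(r_1,L/2)$ we have simultaneously $\xi_1 x_U>0$, $2x_U-L<0$, and $x_U^2-Lx_U+h_{min}^2<0$ (the quadratic is negative strictly between its roots $r_1$ and $r_2$), so $F'(x_U)>0$ on $(r_1,L/2)$. Hence $F$ is strictly increasing on $[r_1,L/2]$, so no point of $(r_1,L/2]$ is optimal either; combined with the previous paragraph this forces $\tilde{x}_U^*\in[0,r_1]=[0,x_U^*]$.

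The only delicate region is $(L/2,r_2)$, where $\tilde{f}$ is decreasing while $\tilde{g}$ is increasing, so a direct sign analysis of $F'$ there is inconclusive; the role of the reflection argument is precisely to dispose of that interval by comparing values of $F$ rather than its derivative. One should also be explicit that, among the two points listed in Lemma~\ref{lemma:optimal UAV deployment for IRS} when $L>2h_{min}$, the relevant $x_U^*$ here is the one closer to the source, $L/2-\sqrt{L^2/4-h_{min}^2}$: the extra term $\xi_1\tilde{g}(x_U)$ (absent in the IRS case) penalises the S-to-R distance and thus breaks the source--destination symmetry in favour of positions nearer $S$.
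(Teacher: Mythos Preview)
Your argument is correct, but the paper's proof is considerably shorter and takes a different route. It simply observes that $x_U^*$ from Lemma~\ref{lemma:optimal UAV deployment for IRS} is a \emph{global} minimiser of $\tilde f$ on $[0,L]$, so $\tilde f(x)\ge\tilde f(x_U^*)$ for every $x$; combining this with the strict monotonicity of $\tilde g$ gives, for every $x>x_U^*$,
\[
F(x)=\xi_1\tilde g(x)+\xi_2\tilde f(x)>\xi_1\tilde g(x_U^*)+\xi_2\tilde f(x_U^*)=F(x_U^*),
\]
and hence no point to the right of $x_U^*$ can minimise $F$. That single value comparison disposes of the entire interval $(x_U^*,L]$ at once, bypassing both your reflection step and the derivative sign analysis on $(r_1,L/2)$.

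Your approach instead first exploits the symmetry $\tilde f(L-x)=\tilde f(x)$ to confine the minimiser to $[0,L/2]$, and then, when $L>2h_{min}$, checks $F'>0$ on $(r_1,L/2)$ via the factorisation $\tilde f'(x_U)=2(2x_U-L)(x_U^2-Lx_U+h_{min}^2)$. This is more hands-on and makes the geometry transparent---in particular you explain clearly why the interval $(L/2,r_2)$, where the sign of $F'$ is ambiguous, must be handled by comparing values rather than derivatives---but it is heavier than necessary once one sees that the global-minimum property of $\tilde f$ at $x_U^*$ already dominates the whole right-hand side in one stroke. Your closing remark that, of the two candidates in Lemma~\ref{lemma:optimal UAV deployment for IRS} when $L>2h_{min}$, the relevant $x_U^*$ is the source-side root $L/2-\sqrt{L^2/4-h_{min}^2}$ is a useful clarification that the paper leaves implicit.
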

\begin{proof}
	First, the solution to minimize $\xi_2 \tilde{f}(x_U)$ is defined in Lemma \ref{lemma:optimal UAV deployment for IRS}. Further, we could see that $\tilde{g}(x_U)$ is increasing over $x_U\in[0,L]$, and as a result the optimal solution of $\xi_1 \tilde{g}(x_U) + \xi_2 \tilde{f}(x_U)$ is no greater than that of sole $\xi_2 \tilde{f}(x_U)$, which proves Lemma \ref{lemma:optimal UAV deployment of AF relay}.
\end{proof}
\vspace{-1em}
Although we have Lemma \ref{lemma:optimal UAV deployment of AF relay}, it is still non-trivial to write the expression of optimal solution. Fortunately, it could be readily proved that $\tilde{f}(x_U)$ and $\tilde{g}(x_U)$ are convex over the intervals of interest, and thus $\xi_1 \tilde{g}(x_U) + \xi_2 \tilde{f}(x_U)$ is also convex. Therefore, one-dimensional searching methods such as golden section search can be adopted to obtain the optimal solution efficiently.
\begin{remark}
	Lemma \ref{lemma:optimal UAV deployment of AF relay} reveals that the UAV should never be placed at D side in the AF relay case. Moreover, if we define $\xi \triangleq \frac{\xi_1}{\xi_2}=\frac{\added{N}\beta_0p_R^{(AF)}(\rho p_R^{(AF)} + \sigma^2)}{\sigma^2}$, we could intuitively deduce that the UAV deployment only depends on $p_R^{(AF)}$, $N$, $L$ and $h_{min}$. If $p_R^{(AF)}\rightarrow +\infty$, then $\xi\gg 1$ and $\tilde{g}(x_U)$ becomes dominant. To minimize $\tilde{g}(x_U)$, we have $\tilde{x}_U^*\rightarrow 0$. On the other hand, if $p_R^{(AF)}\rightarrow 0$, then $\xi\ll 1$ and $\tilde{f}(x_U)$ becomes dominant. In this case, $\tilde{x}_U^*$ will be close to Lemma \ref{lemma:optimal UAV deployment for IRS}.
	\label{remark:remark for AF relay}
\end{remark}
\vspace{-0.5em}
\subsection{DF full-duplex relay-assisted networks}
For ease of notation, we denote \added{$\mu_1 = \frac{p_S^{(DF)}}{\rho p_R^{(DF)} + \sigma^2}$} and $\mu_2=\frac{\added{N}p_R^{(DF)}}{\sigma^2}$. Our goal is to maximize the minimum value between $\mu_1g_{SR}$ and $\mu_1g_{RD}$. Note that in the DF case, we similarly have $y_U^*=0$ and $h_U^*=h_{min}$, which leads to a much simpler problem as follows,
\begin{align}
	\max_{x_U} \min\left\{\frac{\mu_1}{\tilde{g}(x_U)},\frac{\tilde{g}(x_U)\mu_2}{\tilde{f}(x_U)}\right\}\quad \text{s.t.}\ x_U\in [0,L],\label{pro:UAV deployment for DF relay}
\end{align}
with $\tilde{g}(x_U)$ and $\tilde{f}(x_U)$ being defined in problem \eqref{pro:UAV deployment for AF relay:simplified}.
\begin{lemma}
	The optimal solution to problem \eqref{pro:UAV deployment for DF relay} is
	\begin{align}
		&x_U^*=\left\{\begin{aligned}
			&0, \quad\text{if}\ \ \mu_1\leq \nu\mu_2\\
			&L, \quad\text{if}\ \ \mu_1\geq\frac{1}{\nu}\mu_2\\
			&\frac{L}{2}, \quad\text{if}\ \ \mu_1 = \mu_2\\
			&kL\!+\!\sqrt{k^2L^2\!-kL^2\!-h_{min}^2},\ \text{if}\ \ \nu\mu_2<\mu_1 < \mu_2\\
			&kL\!-\!\sqrt{k^2L^2\!-kL^2\!-h_{min}^2},\ \text{if}\ \ \mu_2<\mu_1 < \frac{1}{\nu}\mu_2
		\end{aligned}
		\right.,
%		\vspace{-1em}
	\end{align}\label{lemma:optimal UAV deployment for DF relay}
%\vspace{-1em}
with $\nu=\frac{h_{min}^2}{L^2+h_{min}^2}$ and $k=\frac{\mu_1}{\mu_1 - \mu_2}$.
\end{lemma}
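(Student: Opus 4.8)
The plan is to first simplify the objective of \eqref{pro:UAV deployment for DF relay} by exploiting the factorization $\tilde{f}(x_U)=\tilde{g}(x_U)\bigl((x_U-L)^2+h_{min}^2\bigr)$, which holds because $y_U=0$ and $h_U=h_{min}$ have already been fixed. The second term inside the $\min$ then cancels down to $\tilde{g}(x_U)\mu_2/\tilde{f}(x_U)=\mu_2/\bigl((x_U-L)^2+h_{min}^2\bigr)$, so the problem becomes $\max_{x_U\in[0,L]}\min\{a(x_U),b(x_U)\}$ with $a(x_U)=\mu_1/(x_U^2+h_{min}^2)$ and $b(x_U)=\mu_2/\bigl((x_U-L)^2+h_{min}^2\bigr)$. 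On $[0,L]$ the function $a$ is strictly decreasing and $b$ is strictly increasing; this is the crucial structural fact, because the pointwise minimum of a decreasing and an increasing function on an interval is maximized either at an endpoint or at the unique point where the two curves cross.

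Second, I would carry out the resulting three-way case split. If $a(0)\le b(0)$, then $a\le b$ throughout $[0,L]$, so $\min\{a,b\}=a$ is maximized at the left endpoint $x_U^*=0$; here $a(0)\le b(0)$ is precisely $\mu_1/h_{min}^2\le\mu_2/(L^2+h_{min}^2)$, i.e.\ $\mu_1\le\nu\mu_2$. Symmetrically, if $b(L)\le a(L)$, then $\min\{a,b\}=b$ is maximized at $x_U^*=L$, the condition being equivalent to $\mu_1\ge\mu_2/\nu$. In the remaining regime $\nu\mu_2<\mu_1<\mu_2/\nu$ one has $a(0)>b(0)$ and $a(L)<b(L)$, so the strictly decreasing function $a-b$ has a unique zero $\hat x\in(0,L)$; since $\min\{a,b\}=b<b(\hat x)$ for $x_U<\hat x$ and $\min\{a,b\}=a<a(\hat x)$ for $x_U>\hat x$, the maximizer is $x_U^*=\hat x$.

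Third, I would solve $a(x)=b(x)$, i.e.\ $\mu_1\bigl((x-L)^2+h_{min}^2\bigr)=\mu_2(x^2+h_{min}^2)$, which rearranges to $(\mu_1-\mu_2)x^2-2\mu_1 Lx+\mu_1 L^2+(\mu_1-\mu_2)h_{min}^2=0$. For $\mu_1=\mu_2$ this degenerates to a linear equation with the single root $x_U^*=L/2$. For $\mu_1\neq\mu_2$, dividing by $\mu_1-\mu_2$ and putting $k=\mu_1/(\mu_1-\mu_2)$ gives $x^2-2kLx+kL^2+h_{min}^2=0$, whose roots are $kL\pm\sqrt{k^2L^2-kL^2-h_{min}^2}$; their realness needs no separate verification, since the case split has already exhibited a root inside $(0,L)$. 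It then only remains to pick the correct branch from the signs of the coefficients: when $\nu\mu_2<\mu_1<\mu_2$ we have $k<0$, so the product of the two roots, $kL^2+h_{min}^2$, is non-positive (it equals $0$ at $\mu_1=\nu\mu_2$ and decreases as $\mu_1\uparrow\mu_2$), hence the only nonnegative, and therefore the only feasible, root is $kL+\sqrt{k^2L^2-kL^2-h_{min}^2}$; when $\mu_2<\mu_1<\mu_2/\nu$ we have $k>1$, so the sum of the roots, $2kL$, exceeds $2L$ while both roots are positive, forcing exactly one root into $[0,L]$, namely the smaller one $kL-\sqrt{k^2L^2-kL^2-h_{min}^2}$. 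Assembling the five regimes reproduces the claimed expression for $x_U^*$.

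The main obstacle is not any individual computation but the bookkeeping in this last step --- making sure that in each of the two interior sub-regimes exactly one quadratic root lies in $[0,L]$ and that it has the stated sign. The key that keeps this painless is to invoke the intermediate-value crossing argument \emph{before} touching the quadratic: once a root in $(0,L)$ is known to exist, no discriminant or feasibility inequality has to be checked, and only the signs of $2kL$ (the sum of the roots) and $kL^2+h_{min}^2$ (their product) need to be read off to decide between the $+$ and $-$ branches.
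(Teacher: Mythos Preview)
Your proof is correct and follows essentially the same route as the paper: compare the two terms at the endpoints to isolate the boundary cases $x_U^*=0$ and $x_U^*=L$, and in the interior regime equate the two terms and solve the resulting linear/quadratic equation. Your write-up is in fact more complete than the paper's, which simply asserts the quadratic solutions without the root-selection argument you supply via the sum and product of the roots.
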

\vspace{-1em}
\begin{proof}
	First, it can be seen that $\frac{\mu_1}{\tilde{g}(x_U)}$ is always no greater than $\frac{\tilde{g}(x_U)\mu_2}{\tilde{f}(x_U)}$ if $\mu_1\leq \nu\mu_2$. To maximize $\frac{\mu_1}{\tilde{g}(x_U)}$, we shall have $x_U^* = 0$. Similarly, we need to maximize $\frac{\tilde{g}(x_U)\mu_2}{\tilde{f}(x_U)}$ when $\mu_1\geq\frac{1}{\nu}\mu_2$, which yields $x_U^*=L$. When $\nu\mu_2<\mu_1<\frac{1}{\nu}\mu_2$, the optimal solution is obtained by solving equation $\frac{\mu_1}{\tilde{g}(x_U)}=\frac{\tilde{g}(x_U)\mu_2}{\tilde{f}(x_U)}$. If $\mu_1 = \mu_2$, it is a linear equation with solution $x_U^*=\frac{L}{2}$. Otherwise, it is a quadratic equation with its solutions defined in Lemma \ref{lemma:optimal UAV deployment for DF relay}.
\end{proof}
\begin{remark}
	Lemma \ref{lemma:optimal UAV deployment for DF relay} indicates that UAV deployment in the DF case depends on $p_S^{(DF)}$, $p_R^{(DF)}$, $N$, $L$ and $h_{min}$. If $\mu_1 >\mu_2$, the UAV needs to be deployed at D side. Differently, under conditions $\mu_1 <\mu_2$, the UAV ought to be placed near S.
	\label{remark:remark for DF relay}
\end{remark}
\section{Transmit power minimization under QoS constraint}
\label{sec:minimum transmit power with rate constraints}
%In this section, we aim to minimize the transmit power under the QoS constraint.
%%\vspace{-1em}
%%\subsection{Full-duplex relays}

Considering low energy consumption, the transmit power ought to be minimized, without violating the QoS requirement, say $R_i\geq R_0,i\in\{IRS,AF,DF\}$. From \eqref{eq:rate of IRS}, \eqref{eq:rate for AF relay} and \eqref{eq:rate for DF relay}, it can be seen that data rates are non-decreasing with respect to transmit power. Therefore, the equality must hold at the optimal point and our goal transforms into finding the minimum transmit power under $R_i=R_0,i\in\{IRS,AF,DF\}$, which is concluded in Proposition \ref{prop:minimum transmit power}.
\begin{proposition}
	\label{prop:minimum transmit power}
	To achieve a data rate $R_0$, the minimum transmit power for IRS-aided networks scales down with $N^2$, denoted by
	\begin{equation}		
		p_S^{(IRS)}=\frac{\sigma^2(2^{R_0}-1)}{N^2g_{SR}g_{RD}}.\label{eq:minimum transmit power for IRS}
	\end{equation}
	The AF relay case requires the transmit power
	\vspace{-0.5em}
\begin{equation}
			p_S^{(\!AF\!)}\!\!+\!p_R^{(\!AF\!)}\!=\!2\sqrt{\!\frac{\sigma^2\!(\rho(2^{\!R_0}\!-1)\!+\!g_{SR})\!(2^{\!R_0}\!-\!1)}{N^2g_{SR}^2g_{RD}}}\!+\!\frac{\sigma^2(2^{\!R_0}\!-\!1)}{g_{SR}}.\label{eq:minimumtransmit power for AF relay}
	\end{equation}
The DF relay case requires the transmit power
\begin{equation}\color{blue}
	p_S^{(DF)}\!+\! p_R^{(DF)}\!=\!\frac{\sigma^2\rho(2^{R_0}\!-\!1)^2}{Ng_{SR}g_{RD}}\!+\!\frac{\sigma^2(2^{R_0}\!-\!1)}{g_{SR}}\!+\!\frac{\sigma^2(2^{R_0}\!-\!1)}{Ng_{RD}}.\label{eq:minimumtransmit power for DF relay}
\end{equation}
\vspace{-1em}
\end{proposition}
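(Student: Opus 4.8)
The overall plan is to use the observation made just before the statement: since each of $R_{IRS}$, $R_{AF}$ and $R_{DF}$ is non-decreasing in the transmit power, any slack in the QoS constraint can be removed by lowering the power, so at the optimum the constraint is tight and it suffices to impose $R_i = R_0$ and then minimize the total power over the remaining degrees of freedom. I would treat the three cases one after another.

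For the IRS case there is a single power variable, so I would set the right-hand side of \eqref{eq:rate of IRS} equal to $R_0$ and invert, obtaining $N^2 p_S^{(IRS)} g_{SR} g_{RD}/\sigma^2 = 2^{R_0}-1$, i.e.\ \eqref{eq:minimum transmit power for IRS}, with nothing further to do. For the AF case I would set the bound in \eqref{eq:rate for AF relay} equal to $R_0$ and solve the resulting equation, which is linear in $p_S^{(AF)}$, for $p_S^{(AF)}$ as a function of $p_R^{(AF)}$; this gives an expression of the form $p_S^{(AF)} = a\,p_R^{(AF)} + b/p_R^{(AF)} + c$ with positive constants $a,b,c$ that depend only on $R_0$, $N$, $g_{SR}$, $g_{RD}$, $\rho$ and $\sigma^2$. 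The objective then becomes $p_S^{(AF)} + p_R^{(AF)} = (a+1)p_R^{(AF)} + b/p_R^{(AF)} + c$, a convex function of the single variable $p_R^{(AF)} > 0$; the AM--GM inequality (equivalently the stationarity condition $p_R^{(AF)} = \sqrt{b/(a+1)}$) gives the minimum value $2\sqrt{(a+1)b} + c$, and substituting the constants and simplifying the radical yields \eqref{eq:minimumtransmit power for AF relay}.

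For the DF case I would observe that the constraint $R_{DF}\ge R_0$ with \eqref{eq:rate for DF relay} amounts to requiring both $p_S^{(DF)} g_{SR}/(\rho p_R^{(DF)}+\sigma^2)\ge 2^{R_0}-1$ and $N p_R^{(DF)} g_{RD}/\sigma^2\ge 2^{R_0}-1$ to hold simultaneously. The second forces $p_R^{(DF)} \ge \sigma^2(2^{R_0}-1)/(N g_{RD})$, and for any admissible $p_R^{(DF)}$ the smallest admissible source power is $p_S^{(DF)} = (\rho p_R^{(DF)}+\sigma^2)(2^{R_0}-1)/g_{SR}$; since this, and hence $p_S^{(DF)} + p_R^{(DF)}$, is increasing in $p_R^{(DF)}$, the total power is minimized by taking $p_R^{(DF)}$ at its lower bound, i.e.\ by letting both hops operate at rate exactly $R_0$. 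Back-substituting then produces \eqref{eq:minimumtransmit power for DF relay}.

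I expect the AF case to be the only real obstacle: the one-variable minimization itself is elementary, but correctly carrying the constants through the rate inversion and simplifying the square root afterwards is where algebraic slips are easy. The DF case needs only the short monotonicity argument above to justify balancing the two hops, and the IRS case is immediate.
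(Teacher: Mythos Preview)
Your proposal is correct and follows essentially the same route as the paper: invert $R_{IRS}=R_0$ directly for the IRS case; in the AF case solve $R_{AF}=R_0$ for $p_S^{(AF)}$ in terms of $p_R^{(AF)}$ and apply AM--GM (the paper calls it the Cauchy inequality) to the resulting $(a+1)p_R^{(AF)}+b/p_R^{(AF)}+c$; and in the DF case argue that both hop constraints are tight at the optimum and back-substitute. Your DF justification via monotonicity of $p_S^{(DF)}+p_R^{(DF)}$ in $p_R^{(DF)}$ is a slightly cleaner phrasing of the paper's ``otherwise decrease one power'' contradiction, but the content is identical.
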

\begin{proof}
	See Appendix \ref{proof:proposition 1}. 
\end{proof}
\begin{figure}[t]
	\centering
	\includegraphics[width=.6\columnwidth]{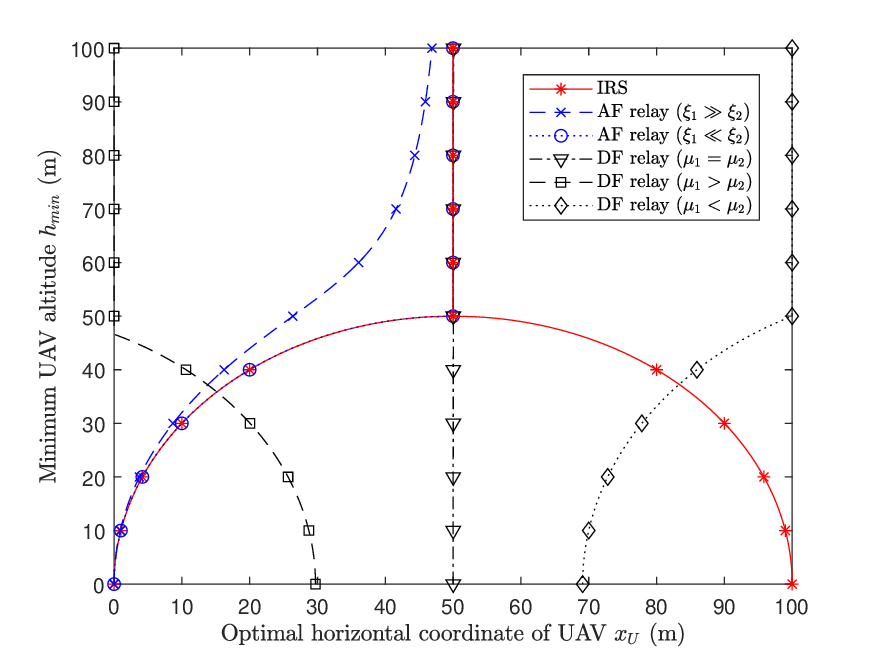}\\
	\caption{Optimal UAV deployment for IRS, AF and DF relays under $N=10$ and $L=100$m}
	\label{fig:optimal UAV deployment}
	\vspace{-1em}
\end{figure}
\vspace{-1em}
\begin{figure}[t]
	\centering
	\includegraphics[width=.6\columnwidth]{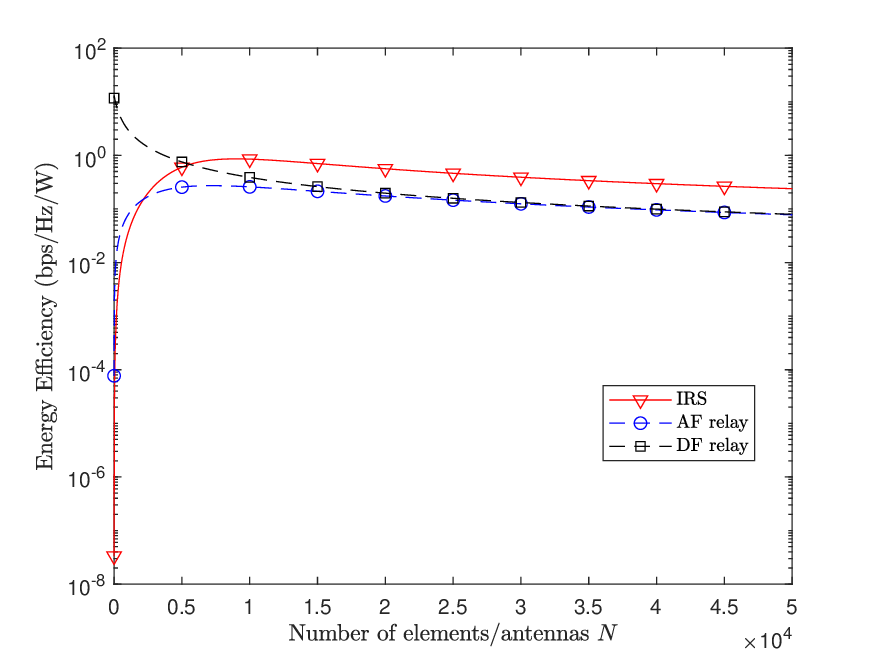}
	\caption{Energy efficiency versus $N$ under $L=200$m and $R_0=4$bps/Hz/W}
	\label{fig:energy efficiency versus N}
	\vspace{-1em}
\end{figure}
\section{Numerical Results}
\label{sec:simulation results}
In this section, we present numerical results to gain useful insights. We adopt an alternating optimization method\cite{huang2019reconfigurable} to handle the coupled nature of UAV deployment and transmit power. This approach involves iteratively updating each variable while keeping the other fixed. The iteration process continues until the decrease in transmit power falls below a predefined threshold $\epsilon$. The default parameter settings are $\beta_0=-60$dB\cite{zeng2016throughput}, $\sigma^2=-110$dBm\cite{zeng2016throughput}, $\rho = -110$dB\cite{sabharwal2014band} and $\epsilon=10^{-3}$\cite{Ding2023maximization}.
%\begin{figure}[t]
%	\centering
%	\includegraphics[width=.95\columnwidth]{Figure/fig_P_L}
%	\caption{Minimum transmit power versus distance $L$ under $h_{min}=100$m and $R_0=4$bps/Hz}
%	\label{fig:minimum power versus distance}
%\end{figure}
%\begin{figure}[t]
%	\centering
%	\includegraphics[width=.95\columnwidth]{Figure/fig_EE_h}\\
%	\caption{\added{Energy efficiency versus $h_{min}$ under $L=200$m and $R_0=4$bps/Hz/W}}
%	\label{fig:energy efficiency versus h_min}
%\end{figure}
%\begin{figure}[t]
%	\centering
%	\includegraphics[width=.95\columnwidth]{Figure/fig_EE_rates}
%	\caption{Energy efficiency versus achievable rate $R_0$ under $L=200$m and $h_{min}=100$m}
%	\label{fig:energy efficiency versus achievable rate}
%\end{figure}

First, we show the optimal $x_U$ versus $h_{min}$ in Fig. \ref{fig:optimal UAV deployment}. The transmit power at S is set to 20dBm\cite{Ding2023maximization}. As expected in Remark \ref{remark:remark for IRS}, when $h_{min}$ is small, IRSs could be deployed near S or D, otherwise they should be placed at the midpoint. Differently, when $\xi_1\ll \xi_2$ ($p_R^{(AF)}=10$dBm), the placement of AF relay is similar to that of IRS at S side. When $\xi_1\gg \xi_2$ ($p_R^{(AF)}=50$dBm), the UAV ought to be deployed closer to S (see Remark \ref{remark:remark for AF relay}). Further, in the DF relay case, we can see that the UAV could be deployed at midpoint ($p_R^{(DF)}=4.3$dBm), S side ($p_R^{(DF)}=8.5$dBm) or D side ($p_R^{(DF)}=0$dBm), depending on the value of $\mu_1$ and $\mu_2$ (see Remark \ref{remark:remark for DF relay}).

In the sequel, we analyze the energy efficiency for each system. The total power consumption contains transmit power and hardware dissipation power. We denote $\omega$ as the drain efficiency. In the IRS-aided case, energy model is expressed by $P^{(IRS)}=\omega^{-1}p_S^{(IRS)} + P_S + P_D + NP_e$,	where $P_S$, $P_D$ and $P_e$ denote the power consumption of S, D and each element of IRS, respectively. For AF and DF relays, we can write their system power as $P^{(i)} = \omega^{-1}(p_S^{(i)}+p_R^{(i)}) + P_S + P_D + P^{(i)}_R + 2NP_a$, with $P^{(i)}_R,i\in\{AF,DF\}$ denoting the hardware dissipation power of relay and $P_a$ representing the power consumption of each antenna. Specifically, $P_S$, $P_D$ and $P_R^{(i)},i\in\{AF,DF\}$ are set equally to 100mW\cite{bjornson2019intelligent}, $P_e$ is 0.33mW\cite{bazrafkan2023performance}, $P_a$ is 0.5mW\cite{bazrafkan2023performance} and drain efficiency $\omega$ is 0.5\cite{bjornson2019intelligent}. Energy efficiency is computed by the ratio of data rates to total power consumption. It is shown in Fig. \ref{fig:energy efficiency versus N} that energy efficiency of both IRS and AF relay first increases and then decreases while that of DF relay keeps declining. Moreover, as $N$ increases, the energy efficiency of DF relay will get closer to AF relay's since $NP_a$ becomes dominant.
\begin{figure*}[t]
	\centering
	\subfigure[Minimum transmit power versus distance $L$ under $h_{min}=100$m and $R_0=4$bps/Hz]{
		\includegraphics[width=.6\columnwidth]{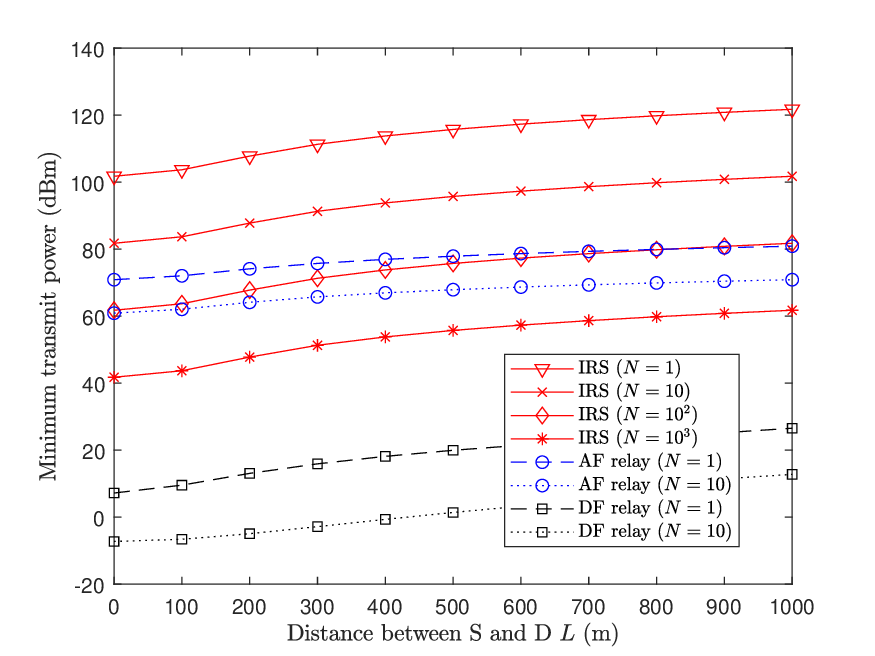}
		%		\caption{Minimum transmit power versus distance $L$ under $h_{min}=100$m and $R_0=4$bps/Hz}
		\label{fig:minimum power versus distance}}
	\hfil
	\centering
	\subfigure[Energy efficiency versus achievable rate $R_0$ under $L=200$m and $h_{min}=100$m]{
		\includegraphics[width=.6\columnwidth]{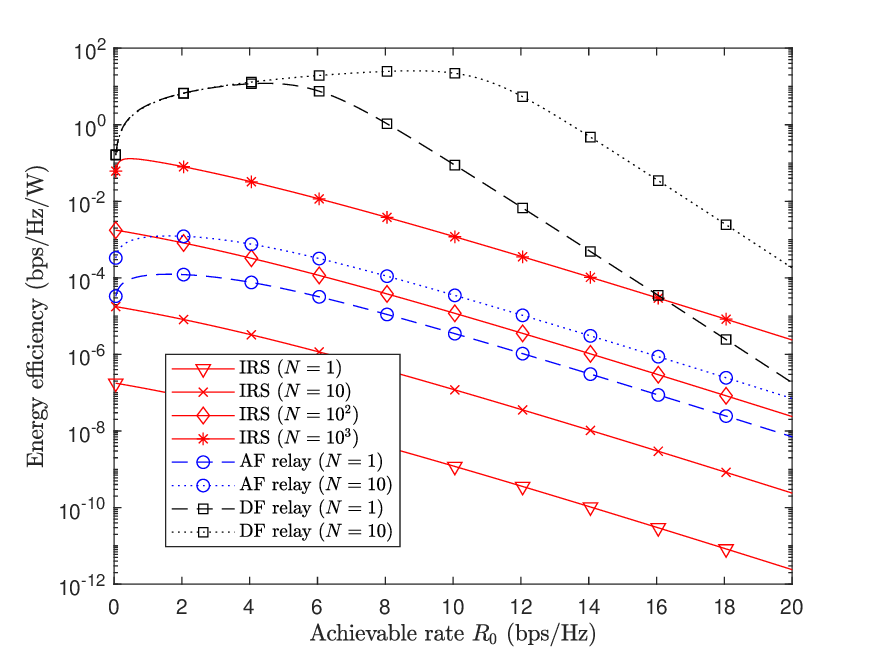}
		%		\caption{Energy efficiency versus achievable rate $R_0$ under $L=200$m and $h_{min}=100$m}
		\label{fig:energy efficiency versus achievable rate}}
	\hfil
	\centering
	\subfigure[Energy efficiency versus $h_{min}$ under $L=200$m and $R_0=4$bps/Hz/W]{
		\includegraphics[width=.6\columnwidth]{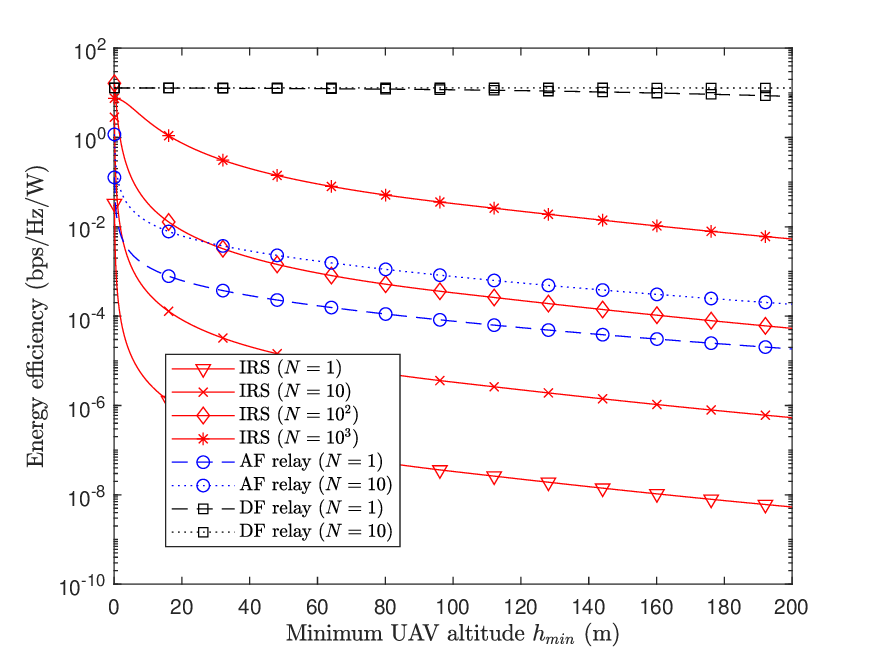}
		%		\caption{\added{Energy efficiency versus $h_{min}$ under $L=200$m and $R_0=4$bps/Hz/W}}
		\label{fig:energy efficiency versus h_min}}
	\caption{Performance comparison between IRSs and full-duplex relays}
	\label{fig:influences of different levels of self-interference cancellation factors}
	\vspace{-1em}
\end{figure*}

Then, we compare IRSs against full-duplex relays under different sizes to draw some useful insights. It is illustrated in Fig. \ref{fig:minimum power versus distance} that higher transmit power is required to achieve a rate $R_0$ as $L$ increases. Also, an IRS of size $N=10^2$ is needed to outperform the AF relay of size $N=1$. On the other hand, IRSs are inferior to DF relays in our setup, even when the size are extremely large, i.e., $N=10^3$. It is shown in Fig. \ref{fig:energy efficiency versus achievable rate} that medium size ($N=10^2$) is needed by IRSs to outperform AF relays. Meanwhile, large size ($N=10^3$) as well as high data rates ($R_0\geq 16$bps/Hz) are required by IRSs to outperform DF relays of size $N = 1$, which coincides with the observation in\cite{bjornson2019intelligent}. Finally, the impact of $h_{min}$ is analyzed in Fig. \ref{fig:energy efficiency versus h_min}. We can see that the energy efficiency of all relaying schemes decreases with $h_{min}$. In particular, DF relay presents a rather small reduction in energy efficiency compared with IRS and AF relay.
\vspace{-0.2em}
\section{Conclusion}
\label{sec:conclusion}
In this article, we analytically compare IRSs against full-duplex relays, specifically in UAV communication scenarios. Both AF and DF relaying schemes are considered for comparison. Further, 3D UAV deployment as well as transmit power are optimized. Our numerical results demonstrate that DF relays generally outperform IRSs, except in cases where the surface size of IRSs is large and data rates are high. Moreover, AF relays exhibit comparable performance to IRSs of medium size.
%\vspace{-0.95em}
\appendices
\section{Proof of Lemma 1}
\label{proof:lemma 1}
First, we readily have $x_U^*\in[0,L]$, $y_U^*=0$ and $h_U^*=h_{min}$ based on \cite[Lemma~3]{zeng2016throughput}. Then the problem is simplified as finding $x_U\in [0,L]$ to minimize $\tilde{f}(x_U)=(x_U^2 +h_{min}^2)((x_U-L)^2 +h_{min}^2)$. Following \cite[Theorem~1]{ren2023ondeployment}, we compute the first derivative of $\tilde{f}(x_U)$ as
\begin{equation}
	\begin{aligned}
		\frac{d\tilde{f}(x_U)}{dx_U}&=4x_U^3-6Lx_U^2+(2L^2 + 4h_{min}^2)x_U-2h_{min}^2L\\
	&=4\bigg(x_U-\frac{L}{2}\bigg)\bigg(\bigg(x_U-\frac{L}{2}\bigg)^2+h_{min}^2-\frac{L^2}{4}\bigg).
	\end{aligned}
\end{equation}

1) If $L\leq 2h_{min}$, we have $\frac{d\tilde{f}(x_U)}{dx_U}<0$ for $x_U \in[0,\frac{L}{2})$ and $\frac{d\tilde{f}(x_U)}{dx_U}>0$ for $x_U \in(\frac{L}{2},L]$. Obviously, $x_U =\frac{L}{2}$ is the minimum point of $\tilde{f}(x)$;

2) On the other hand, if $L> 2h_{min}$, we can obtain $x_{1,3}=\frac{L}{2} \pm\sqrt{\frac{L^2}{4}-h_{min}^2}$ and $x_2=\frac{L}{2}$ by setting $\frac{d\tilde{f}(x_U)}{dx_U}=0$. It is observed that $\frac{d\tilde{f}(x_U)}{dx_U}<0$ for $x_U \in[0,x_1)\cup(x_2,x_3)$ and $\frac{d\tilde{f}(x_U)}{dx_U}>0$ for $x_U \in(x_1,x_2)\cup (x_3,L]$. Therefore, $x_1$ and $x_3$ are two minimum points and $\tilde{f}(x_1)=\tilde{f}(x_2)$ since $\tilde{f}(x_U)$ is symmetric about $x_U=\frac{L}{2}$.
\vspace{-0.2em}
\section{Proof of Proposition 1}
\label{proof:proposition 1}
\eqref{eq:minimum transmit power for IRS} can be directly obtained by solving $R_{IRS}=R_0$. In the sequel, we focus on AF and DF relays. 

1) For the AF relay case, by setting $R_{AF}=R_0$, we shall obtain $p_S^{(AF)}=\frac{\rho(2^{R_0}-1)}{g_{SR}}p_R^{(AF)}+\frac{\sigma^2(2^{R_0}-1)}{N^2g_{SR}g_{RD}p_R^{(AF)}}+\frac{\sigma^2(2^{R_0}-1)}{g_{SR}}$. Therefore, $p_S^{(AF)}+ p_R^{(AF)}=\frac{\rho(2^{R_0}-1)+g_{SR}}{g_{SR}}p_R^{(AF)}+\frac{\sigma^2(2^{R_0}-1)}{N^2g_{SR}g_{RD}p_R^{(AF)}}+\frac{\sigma^2(2^{R_0}-1)}{g_{SR}}\overset{(b)}{\geq}2\sqrt{\frac{\sigma^2(\rho(2^{R_0}-1)+g_{SR})(2^{R_0}-1)}{N^2g_{SR}^2g_{RD}}} + \frac{\sigma^2(2^{R_0}-1)}{g_{SR}}$, where $(b)$ is obtained with Cauchy inequality and the equality holds when $p_R^{(AF)}=\sqrt{\frac{\sigma^2(2^{R_0}-1)}{N^2g_{RD}(\rho(2^{R_0}-1)+g_{SR})}}$.

2) For the DF relay case, \added{$\frac{p_S^{(DF)}g_{SR}}{\rho p_R^{(DF)}+\sigma^2}=\frac{Np_R^{(DF)}g_{RD}}{\sigma^2}$} must hold since otherwise we could further decrease $p_S^{(DF)}$ or $p_R^{(DF)}$ to achieve the equality without violating rate requirement $R_0$. Thereby, using $R_{DF}=R_0$ and after some algebra, we have $p_S^{(DF)}=\added{\frac{\sigma^2\rho(2^{R_0}-1)^2}{Ng_{SR}g_{RD}}+\frac{\sigma^2(2^{R_0}-1)}{g_{SR}}}$ and $p_R^{(DF)}=\added{\frac{\sigma^2(2^{R_0}-1)}{Ng_{RD}}}$.
\vspace{-0.2em}
\bibliographystyle{IEEEtran}
\bibliography{ref}

\vfill

\end{document}